\newtheorem{theorem}{Theorem}
\newtheorem{proposition}{Proposition}
\newtheorem{lemma}{Lemma}
\title{\LARGE \bf Advertising Competitions in Social Networks}
\author[1]{Antonia Maria Masucci
\thanks{Email: \href{mailto:antonia.masucci@inria.fr}{antonia.masucci@inria.fr}}}
\affil[1]{INRIA Paris-Rocquencourt\\ Domaine de Voluceau B.P. 105\\ 78153 Le Chesnay\\ France}
\author[2]{Alonso Silva
\thanks{Email: \href{mailto:alonso.silva@nokia-bell-labs.com}{alonso.silva@nokia-bell-labs.com}
To whom correspondence should be addressed.}}
\affil[2]{Nokia Bell Labs\\ Centre de Villarceaux\\ Route de Villejust\\ 91620 Nozay\\ France}
\date{}
\begin{document}
\maketitle
\thispagestyle{empty}
\pagestyle{empty}

\begin{abstract}

In the present work, we study the advertising competition of several marketing campaigns
who need to determine how many resources to allocate to potential customers
to advertise their products through direct marketing
while taking into account that competing marketing campaigns are trying to do the same.
Potential customers rank marketing campaigns according to the offers, promotions or discounts made to them.
Taking into account the intrinsic value of potential customers as well as
the peer influence that they exert over other potential customers
we consider the network value as a measure of their importance in the market
and we find an analytical expression for it.

We analyze the marketing campaigns competition from a game theory point of view, finding a closed form expression of the symmetric equilibrium offer strategy for the marketing campaigns from which no campaign has any incentive to deviate.
We also present several scenarios, such as Winner-takes-all and Borda, but not the only possible ones for which our results allow us to retrieve in a simple way the corresponding equilibrium strategy.

%
%

\end{abstract}

\section{INTRODUCTION}

In the internet age, direct marketing,
which promotes a product or service exclusively to potential customers likely to be profitable,
has brought the attention of marketing campaigns
replacing in some instances
and complementing in others 
the traditional mass marketing which promotes a product or service
indiscriminately to all potential customers.

In the context of direct marketing,
Domingos and Richardson~\cite{DomingosR2001, RichardsonD2002}
considered the {\sl network value of a customer}
by incorporating the influence of peers on the
decision making process of potential customers
deciding between different products or services promoted by
competing marketing campaigns.
If each potential customer makes a buying decision
independently of every other potential customer,
we should only consider his intrinsic value, i.e.
the expected profit from sales to him.
However, when we consider the often strong
influence potential customers exert on their peers, friends, etc.,
we have to incorporate this influence
to their network value.

Most of the existing state of the art
considers that there is
an incumbent that holds the market 
and a challenger who needs to allocate
advertisement through direct marketing
for certain individuals at a given cost of adoption
to promote the challenger product or service.
However, the cost of adoption is unknown
for most potential customers.

In the present work,
our focus is on how many resources to allocate
to potential customers, while knowing
that competing marketing campaigns are
doing the same, for them to adopt
one marketing campaign versus another.
%
We are interested
on the scenario when several competing
marketing campaigns
need to simultaneously and independently
decide how many resources to allocate
to potential customers to advertise
their products while most of the state-of-the-art focus
in only one marketing campaign (the non-simultaneous
case is also analyzed).
The process and dynamics by which
influence is spread is given
by the voter model.

\subsection{Related Works}

The general problem of influence maximization
was first introduced by Domingos and Richardson~\cite{DomingosR2001,RichardsonD2002}.
Based on the results of Nemhauser et al.~\cite{NemhauserWF1978},
Kempe et al.~\cite{Kempe2003,Kempe2005}
provided a greedy $(1-1/e-\varepsilon)$-approximation algorithm
for the spread maximizing set.
A slightly different model but with similar flavor,
the voter model, was introduced by Clifford and Sudbury~\cite{CliffordS1973}
and Holley and Liggett~\cite{HolleyL1975}.
In that model of social networks, Even-Dar and Shapira~\cite{EvenDar2007}
found an exact solution to the spread
maximization set.
In this work, we focus on this model of social networks
since even if the solutions are not always simple,  we can find them explicitly. 

Competitive influence in social networks has been studied
in other scenarios.
%
Bharathi et al.~\cite{BharathiKS2007}
proposed a generalization of the independent
cascade model of social networks 
and gave a $(1-1/e)$ approximation
algorithm for computing the best response
to an already known opponent's strategy.
Sanjeev and Kearns~\cite{GoyalK2012}
studied the case of two players
simultaneously choosing
some nodes to initially seed
while considering two independent functions
for the consumers
denoted switching function and
selection function.
Borodin et al.~\cite{BorodinFO2010}
showed that for a broad family of competitive
influence models is NP-hard
to achieve an approximation that
is better that the square root
of the optimal solution.
Chasparis and Shamma~\cite{ChasparisS2010}
found optimal advertising policies using dynamic programming
on some particular models of social networks.

Within the general context of competitive contests,
there is an extensive literature (see e.g.~\cite{GrossW1950,Roberson2006,masucciS2014,MasucciS2015}).
To study competitive contests, we use recent advances
of game theory, and in particular of Colonel Blotto games.
%
%
The Colonel Blotto game,
was first solved for the case of two generals and three battlefields by Borel~\cite{Borel1921,BorelV1938}.
For the case of equally valued battlefields,
also known as homogeneous battlefields case,
this result was generalized for any number of battlefields by Gross and Wagner~\cite{GrossW1950}.
Gross~\cite{Gross1950} 
proved the existence and a method to construct the joint probability distribution.
Laslier and Picard~\cite{LaslierP2002} provided
alternative methods to construct the joint distribution by extending the method
proposed by Gross and Wagner~\cite{GrossW1950}.
Roberson~\cite{Roberson2006}
focused on the case of two generals, homogeneous battlefields and different budgets (also known as asymmetric budgets case).
Friedman~\cite{Friedman1958} studied the Nash equilibrium and best response function for
the asymmetric budgets case with two generals.
The case of two generals and where for each distinct value
there are at least three battlefields with the same value was stated and solved by
Roberson~\cite{Roberson2010} and Shwartz et al.~\cite{SchwartzLS2014}.
In the context of voting systems, Myerson~\cite{myerson1993}
found the solution for the case for equally valued battlefields with ranking scores
for any number of candidates.

The plan of this work is as follows.
In Section~\ref{sec:model} we describe the model that we are considering.
In Section~\ref{sec:results} we give the main results that we have obtained.
In Section~\ref{sec:simulations} we give simulations on some scenarios and
in Section~\ref{sec:conclusions} we conclude and describe future extensions of our work.


\section{MODEL}\label{sec:model}

Consider the set of marketing campaigns~\mbox{$\mathcal{K}=\{1,2,\ldots,K\}$}
that need to allocate a certain budget, denoted by~$B$, across a set of potential
customers \mbox{$\mathcal{V}=\{1,2,\ldots,N\}$} through offers (or promotions or discounts).
Each potential customer indicates his preferences through a ranking (defined in the following subsection) of the $K$ products or services promoted
by the marketing campaigns.
For $n\in\mathcal{V}$, we denote by $w_n$ the intrinsic value of potential customer~$n$ and
by $W=\sum_{n\in\mathcal{V}} w_n$ the total intrinsic value of the set of potential customers.
Similarly, we denote by $v_n$ the network value (to be determined) of potential customer~$n\in\mathcal{V}$
and by $V=\sum_{n\in\mathcal{V}} v_n$ the total network value of the set of potential customers.

To avoid specifying the number of potential customers and dealing with the
complexities of large finite numbers, we consider the number
of potential customers to be essentially infinite.
We should, however, interpret such an infinite model as an approximation
to a large finite population with hundreds or thousands of potential customers.
We assume that campaigns' offers are independent across individual potential customers,
so that no potential customer's offers have any specific relationship with 
any other set of potential customers' offers.
This offers' independence assumption greatly simplifies our analysis, because it allows us
to completely characterize a marketing campaign's promises by the marginal
distribution of his offers to potential customers, without saying anything more
about the joint distribution of offers to various sets of potential customers.
The infinite-population assumption (suggested and used in~\cite{myerson1993})
was introduced above essentially only to
justify this simplifying assumption of offers' independence across potential customers.


Each marketing campaign's budget constraint is expressed
as a constraint on the average offer per potential customer that a
marketing campaign can promise. Specifically,
we assume here that each marketing campaign's offer distribution for potential customer~$n$
must have mean~$B v_n/V$ to be considered credible by potential customer~$n$.
The reason is that budget $B$ should be allocated across $N$ potential customers
and each potential customer~$n$ has relative value~$v_n/V$.


With a finite population of $N$ potential customers, and with a fixed budget of $B$ dollars
to be allocated, marketing campaign promises could not be
independent across all potential customers, because the offers to all potential customers would have to
sum the given budget~$B$.
However, due to Kolmogorov's strong law of large numbers, as the number of potential customers $N$ increases,
the sum of independently distributed offers with high probability will converge to the budget~$B$.
Indeed, if the mean of the campaign's offer distribution for potential customer $n\in\mathcal{V}$ is given by $Bv_n/V$
and the support of the distribution
is bounded then, for any small positive number~$\varepsilon$, $N$ potential customers' offers
that are drawn independently from the campaign's distribution would have probability
less than $\varepsilon$ of totalling more than $(1+\varepsilon)\sum_{n\in\mathcal{V}}Bv_n/V=B(1+\varepsilon)$,
when $N$ is sufficiently large.
Thus, taking the limit as the population goes to infinity,
we can assume that each campaign makes independent offers to every potential customer
and the budget constraint will hold with high probability.

The potential customers and their influence relationships can be modeled
as an undirected graph with self-loops~$\mathcal{G}=(\mathcal{V},\mathcal{E})$
where $\mathcal{V}$ is the set of nodes
which represent the potential customers
and $\mathcal{E}$ is the set of edges which
represent the mutual influence between potential customers.

\subsection*{Notation}
Part of the notation is summarized in Table~\ref{table:notation}.
We denote by~$\lvert\mathcal{A}\rvert$ the cardinality of set~$\mathcal{A}$.
We denote by index $k$ one of the marketing campaigns
and by index $-k$ the competing (or set of competing) marketing campaign(s) to~$k$.
For a potential customer $n\in\mathcal{V}$, we denote by $\mathcal{N}(n)$ the set of neighbors
of $n$ in graph~$\mathcal{G}$, i.e. \mbox{$\mathcal{N}(n)=\{m\in\mathcal{V}: \{n,m\}\in\mathcal{E}\}$}.

\begin{table*}
\caption{Notation}
\label{table:notation}
\centering
\begin{tabular}{|c|l|}
    \hline
    $\mathcal{V}=\{1,2,\ldots,N\}$ & Set of potential customers\\ \hline
    $\mathcal{K}=\{1,2,\ldots,K\}$ & Set of marketing campaigns\\ \hline
    $B$ & Total budget of marketing campaigns\\ \hline
    $w_n$ & Intrinsic value of potential customer $n$\\ \hline
    $v_n$ & Network value of potential customer $n$\\ \hline
    $W=\sum_{n\in\mathcal{V}}w_n$ & Total intrinsic value of potential customers\\ \hline
    $V=\sum_{n\in\mathcal{V}}v_n$ & Total network value of potential customers\\ \hline
    $\mathcal{G}=(\mathcal{V},\mathcal{E})$ & Graph of influence relationships\\ \hline
    $M$ & Normalized transition matrix of~$\mathcal{G}$\\ \hline
    $(s_1,s_2,\ldots,s_K)$ such that  & \multirow{3}{*}{Normalized rank-scoring rule}\\
    $s_1\ge s_2\ge\ldots\ge s_K=0$, & \\
    $\sum_{j\in\mathcal{K}}s_j=1$ & \\ \hline
    $x_{k,n}$ & Offer of campaign~$k$ to customer~$n$\\ \hline
    $\mathbf{x}_{k,\scalerel*{\cdot}{\bigodot}}$ & Vector of offers of marketing campaign~$k$\\ \hline
    $\mathbf{X}=\{x_{k,n}\}_{k\in\mathcal{K},n\in\mathcal{V}}$ & Matrix of offers\\ \hline
    $\mathbf{X}_{-k,\scalerel*{\cdot}{\bigodot}}$ & Matrix of offers of competing campaigns\\ \hline
    $\pi^{\mathrm{INT}}$ & Intrinsic payoff function\\ \hline
    $\pi$ & (Network) payoff function\\ \hline
    $u^t_n(\cdot)$ & Ranking function\\ \hline
    $f^0(\cdot)$ & Initial preferences\\ \hline
    $f^t(\cdot)$ & Preferences at time $t$\\ \hline
  \end{tabular}
\end{table*}

\subsection{Normalized rank-scoring rules}

We consider that each potential customer 
ranks the set of marketing campaigns $\mathcal{K}$
in order of their offers to her.
We assume a normalized rank-scoring rule characterized by an ordered sequence
of $K$ numbers, which we denote by $s_1,s_2,\ldots,s_K$, where
\mbox{$s_1\ge s_2\ge\ldots\ge s_K=0$} and such that $\sum_{k=1}^K s_k=1$.
We consider that each potential customer~$n\in\mathcal{V}$
distributes her value~$v_n$ across marketing campaigns
according to this normalized rank-scoring rule \mbox{$\mathbf{s}=(s_1,s_2,\ldots,s_K)$} as follows:
\mbox{$v_n\mathbf{s}=(v_n s_1, v_n s_2, \ldots, v_n s_K)$}.
Thus, potential customer \mbox{$n\in\mathcal{V}$} gives the top-ranked marketing campaign
$v_n s_1$ points, the second-ranked marketing campaign $v_n s_2$, and so on,
with the $k$th ranked marketing campaign getting~$v_n s_k$ for all $k\in\mathcal{K}$.
Therefore, the payoff distributed is indeed $\sum_{k=1}^K v_n s_k= v_n\sum_{k=1}^K s_k=v_n$
where the last equality is coming from the normalization of the rank-scores.
Each marketing campaign's payoff corresponds to the sum of the payoffs
across all potential customers.

The previous assumption is not restrictive.
Given any rank-scoring rule, where $s_1,s_2,\ldots,s_K,$
are not all equal and without loss of generality $s_1\ge s_2\ge\ldots\ge s_K$,
it can be normalized to fulfill the previous statement.
Indeed, let 
$S=\sum_{j=1}^K (s_j-s_K)$.
We observe that we can normalize the rank-scoring rule as follows
$(s'_1,s'_2,\ldots,s'_K)=(\frac{s_1-s_K}{S},\frac{s_2-s_K}{S},\ldots,\frac{s_K-s_K}{S})$
so that $s'_K=0$ and the sum of the rank-scores $S'$ is equal to $1$.

\subsection{Intrinsic payoff function}

We assume that the intrinsic value of potential customer $n\in\mathcal{V}$ is given by $w_n\le U$
with $U$ finite and we denote by~${\bf w}=(w_1,w_2,\ldots,w_N)$
the vector of intrinsic values of potential customers.
We consider the matrix of offers of marketing campaigns to potential customers, denoted by ${\bf X}=(x_{k,n})$,
where $x_{k,n}$ corresponds to the offer of marketing campaign~$k\in\mathcal{K}$ to potential customer~$n\in\mathcal{V}$.
We denote by ${\bf x}_{k,\scalerel*{\cdot}{\bigodot}}=(x_{k,1},x_{k,2},\ldots,x_{k,N})$ the vector of offers of marketing campaign~$k\in\mathcal{K}$.
We consider the matrix of offers to potential customers but only of the competing marketing campaigns to~$k$,
denoted by ${\bf X}_{{-k},\scalerel*{\cdot}{\bigodot}}$.

For potential customer~$n\in\mathcal{V}$, we consider a ranking function \mbox{$u_n:\mathcal{K}\rightarrow\{1,2,\ldots,K\}$}
which maps a given marketing campaign~$k$ to its ranking given by that potential customer.
For example, if marketing campaign~$k$ is the top-ranked marketing campaign
and $k'$ is the third-ranked marketing campaign for potential customer~$n\in\mathcal{V}$
then~$u_n(k)=1$ and $u_n(k')=3$.

The intrinsic payoff function for marketing campaign~$k$ is given by
\begin{equation}\label{eq:intrinsicpayoff}
\pi_k^{\mathrm{INT}}({\bf x}_{k,\scalerel*{\cdot}{\bigodot}},{\bf X}_{{-k},\scalerel*{\cdot}{\bigodot}},{\bf w})
=\sum_{n=1}^N w_n s_{u_n(k)},
\end{equation}
where $s_{u_n(k)}$ corresponds to the rank-score given by potential customer~$n$ for the ranking of marketing campaign~$k$.
We observe that $s_{u_n(k)}$ depends only on the offers made to potential customer~$n$.

\subsection{Evolution of the system}

We consider that time is slotted and without loss of generality we
consider that the initial time~\mbox{$t_0=0$}. 
We consider the function \mbox{$f^0:\mathcal{V}\rightarrow\mathcal{K}^K$} which maps a potential customer \mbox{$n\in\mathcal{V}$}
to her initial preferences \mbox{$f^0(n)=(f^0_1(n),f^0_2(n),\ldots,f^0_K(n))$}
where $f^0_1(n)$ corresponds to her initial top-ranked marketing campaign,
$f^0_2(n)$ corresponds to her initial second-ranked marketing campaign, and so on.
Similarly, for $t\ge1$ we consider function \mbox{$f^t:\mathcal{V}\rightarrow\mathcal{K}^K$} which maps a potential customer \mbox{$n\in\mathcal{V}$}
to her preferences  at time~$t$, denoted by
\mbox{$f^t(n)=(f^t_1(n),f^t_2(n),\ldots,f^t_K(n))$},
where $f^t_1(n)$ corresponds to her top-ranked marketing campaign at time $t$,
$f^t_2(n)$ corresponds to her second-ranked marketing campaign at time $t$, and so on.

%

The evolution of the system will be described by the voter model.
Starting from any arbitrary initial preference assignment by the potential customers
of $\mathcal{G}$, at each time $t\ge 1$, each potential customer picks uniformly at random
one of his neighbors and adopts his opinion. Equivalently,
$f^t(j)=f^{t-1}(j')$ with probability $1/\lvert\mathcal{N}(j)\rvert$ if \mbox{$j'\in\mathcal{N}(j)$}.

Similarly to the previous subsection,
for $t\ge0$ and potential customer~$n\in\mathcal{V}$, we consider function \mbox{$u_n^t:\mathcal{K}\rightarrow\{1,2,\ldots,K\}$}
which for a given marketing campaign~$k\in\mathcal{K}$ gives you the ranking of the marketing campaign for potential
customer~\mbox{$n$} at time $t$.

We are interested on the network value of a potential customer.
Following the steps of~\cite{masucciS2014}, in the next section we compute this value.



\section{RESULTS}\label{sec:results}

\subsection{Network value of a customer}

We notice that in the voter model described in the previous section, the probability
that potential customer $j$ adopts the opinion of one her neighbors $j'$
is precisely $1/\lvert \mathcal{N}(j)\rvert$. Equivalently, this is the probability
that a random walk of length $1$ that starts at $j$
ends up in~$j'$.
Generalizing this observation by induction on~$t$,
we obtain the following proposition.

\begin{proposition}[Even-Dar and Shapira~\cite{EvenDar2007}]
Let $p_{j,j'}^t$ denote the probability that a random
walk of length $t$ starting at potential customer $j$
stops at potential customer $j'$.
Then the probability that after $t$ iterations of the voter model,
potential customer $j$ will adopt the opinion that potential customer $j'$ had at time $t=0$
is precisely $p_{j,j'}^t$.
\end{proposition}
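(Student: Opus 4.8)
The plan is to proceed by induction on the walk length $t$, matching the recursive structure of the voter dynamics against the first-step (Chapman--Kolmogorov) decomposition of the random walk. It is cleanest to track opinions genealogically: since at each iteration a node copies exactly one neighbor, $j$'s opinion at time $t$ descends from a unique initial preference, and the event of interest is that this source is the initial preference $f^0(j')$ of node $j'$. The base case $t=1$ is exactly the observation recorded just before the statement: after one iteration $j$ holds $j'$'s time-$0$ opinion iff it selects $j'$, which occurs with probability $1/\lvert\mathcal{N}(j)\rvert$ when $j'\in\mathcal{N}(j)$ and $0$ otherwise, coinciding with the one-step transition probability $p_{j,j'}^{1}$.

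For the inductive step I would assume the claim for $t-1$ and for every pair of customers, then peel off the most recent update. At the $t$-th iteration node $j$ copies the time-$(t-1)$ opinion of a uniformly chosen neighbor $m\in\mathcal{N}(j)$, this choice being made independently of all earlier updates. Conditioning on $m$, the event that $j$'s opinion at time $t$ descends from $f^0(j')$ coincides with the event that $m$'s opinion at time $t-1$ descends from $f^0(j')$. Applying the law of total probability together with the induction hypothesis to the pair $(m,j')$ yields
$$P\bigl(f^t(j)=f^0(j')\bigr)=\sum_{m\in\mathcal{N}(j)}\frac{1}{\lvert\mathcal{N}(j)\rvert}\,p_{m,j'}^{\,t-1}.$$

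It then remains to identify the right-hand side with $p_{j,j'}^{t}$. Decomposing a length-$t$ walk from $j$ to $j'$ by its first step (to a uniformly chosen neighbor $m$, with probability $1/\lvert\mathcal{N}(j)\rvert$) followed by a length-$(t-1)$ walk from $m$ to $j'$ gives precisely $p_{j,j'}^{t}=\sum_{m\in\mathcal{N}(j)}\frac{1}{\lvert\mathcal{N}(j)\rvert}\,p_{m,j'}^{\,t-1}$, which matches the displayed expression term by term and closes the induction.

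The main obstacle is purely conceptual bookkeeping rather than computation: one must be careful about the direction of time relative to the direction of the walk, and must justify the conditioning step by invoking that the neighbor chosen by $j$ at the $t$-th iteration is independent of the evolution during the first $t-1$ steps (the Markov property of the voter updates). Once this independence is in place, the recursion induced by the voter model is literally the first-step decomposition of the random walk, and no further work is needed.
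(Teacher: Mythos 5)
Your proof is correct and follows essentially the same route as the paper, which only sketches the argument as ``generalizing the length-$1$ observation by induction on $t$'': you supply exactly that induction, with the base case being the stated one-step observation and the inductive step matching the most recent voter update to the first step of the walk via the Chapman--Kolmogorov decomposition. Your explicit genealogical bookkeeping and the independence justification for the conditioning are the right details to fill in, and nothing in your argument diverges from the paper's intended proof.
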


By linearity of expectation,
the expected network payoff for marketing campaign~$k\in\mathcal{K}$ at target time $\tau$,
denoted by $\pi^\tau_k$, is given by
\begin{equation*}
\pi^\tau_k=\sum_{j\in\mathcal{V}}\sum_{j'\in\mathcal{V}} w_j p^\tau_{j,j'} s_{u_{j'}^\tau(k)}.
\end{equation*}

Let $M$ be the normalized transition matrix of $\mathcal{G}$, i.e.,
$M(j,j')=1/\lvert\mathcal{N}(j)\rvert$ if $j'\in\mathcal{N}(j)$ and zero otherwise.
The probability that a random walk
of length~$\tau$ starting at~$j$ ends in $j'$
is given by the $(j,j')$-entry of the matrix~$M^\tau$.
Then
\begin{equation*}
\pi^\tau_k=\sum_{j\in\mathcal{V}}\sum_{j'\in\mathcal{V}} w_j M^\tau(j,j') s_{u_{j'}^\tau(k)}.
\end{equation*}
Therefore, the expected network payoff is given by
\begin{equation}\label{eq:networkpayoff}
\pi^\tau_k=\sum_{j'\in\mathcal{V}} v_{j'} s_{u_{j'}^\tau(k)},
\end{equation}
where the network value of potential customer $j'$ at target time~$\tau$ is given by
\begin{equation*}
v_{j'}=\sum_{j\in\mathcal{V}} w_{j}M^\tau(j,j').
\end{equation*}

We can formalize this in the following statement.
\begin{theorem}\label{theo:doumbodo}
Under the rank-scoring rule
with normalized ranking points $(s_1,s_2,\ldots,s_K)$
and intrinsic values $(w_1,w_2,\ldots,w_N)$,
the network value of potential customer $j'$ at target time~$\tau$ is given by
\begin{equation*}
v_{j'}=\sum_{j\in\mathcal{V}} w_{j}M^\tau(j,j'),
\end{equation*}
where $M$ is the normalized transition matrix of~$\mathcal{G}$.
\end{theorem}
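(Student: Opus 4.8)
The plan is to obtain the stated formula by computing the expected network payoff $\pi^\tau_k$ explicitly and then reading off the coefficient that multiplies each rank-score $s_{u_{j'}^\tau(k)}$; by the decomposition~\eqref{eq:networkpayoff} that coefficient is exactly what we want to call the network value $v_{j'}$. The essential input is the Even-Dar and Shapira proposition stated above, which identifies the probability that customer $j$ ends up holding customer $j'$'s initial opinion after $\tau$ voter-model steps with the $\tau$-step random-walk probability $p^\tau_{j,j'}$.

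First I would expand $\pi^\tau_k$ using linearity of expectation over the voter-model randomness. Each customer $j$ contributes her intrinsic value $w_j$, and averaging over which source customer $j'$'s opinion has propagated to $j$ — weighted by $p^\tau_{j,j'}$ and carrying the rank-score $s_{u_{j'}^\tau(k)}$ for campaign $k$ — produces the double sum
\begin{equation*}
\pi^\tau_k=\sum_{j\in\mathcal{V}}\sum_{j'\in\mathcal{V}} w_j\, p^\tau_{j,j'}\, s_{u_{j'}^\tau(k)}.
\end{equation*}
Next I would pass from walk probabilities to matrix powers. Because $M$ is the one-step normalized transition matrix and each per-step voter update is an independent uniform choice of neighbor, the Chapman-Kolmogorov composition of $\tau$ single steps gives $p^\tau_{j,j'}=M^\tau(j,j')$. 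Substituting this and interchanging the two sums over $\mathcal{V}$, which is a finite rearrangement since $M^\tau$ is an $N\times N$ matrix, yields
\begin{equation*}
\pi^\tau_k=\sum_{j'\in\mathcal{V}}\Bigl(\sum_{j\in\mathcal{V}} w_j\, M^\tau(j,j')\Bigr) s_{u_{j'}^\tau(k)}.
\end{equation*}
Matching this against $\pi^\tau_k=\sum_{j'} v_{j'}\, s_{u_{j'}^\tau(k)}$ forces the bracketed term to equal $v_{j'}$, which is the assertion.

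Since the heart of the argument is just a rearrangement of finite sums, the only genuine content lies in the two probabilistic identities feeding it: the random-walk correspondence supplied by the cited proposition, and the passage $p^\tau=M^\tau$. Both are standard, so I expect the main care to go into stating them precisely rather than into any hard estimate. As a consistency check I would also note that each row of $M^\tau$ sums to one, since stochasticity is preserved under matrix powers, whence $\sum_{j'} v_{j'}=\sum_j w_j=W$. This shows that the diffusion merely redistributes the intrinsic values among customers, which is precisely the interpretation that justifies calling $v_{j'}$ the network value.
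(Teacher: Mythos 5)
Your proposal is correct and follows essentially the same route as the paper: linearity of expectation over the voter-model randomness, the Even-Dar--Shapira identification of adoption probabilities with $\tau$-step random-walk probabilities $p^\tau_{j,j'}=M^\tau(j,j')$, and an interchange of the finite double sum to read off the coefficient of $s_{u_{j'}^\tau(k)}$ as $v_{j'}$. The closing observation that row-stochasticity of $M^\tau$ gives $\sum_{j'}v_{j'}=W$ is a sensible sanity check not present in the paper, but the argument is otherwise the same.
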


We notice that both eqns.~\eqref{eq:intrinsicpayoff} and~\eqref{eq:networkpayoff} are similar.
The only difference is that one considers the intrinsic value and the other
the network value (given by Theorem~\ref{theo:doumbodo}) of potential customers.
From eqns.~\eqref{eq:intrinsicpayoff} and~\eqref{eq:networkpayoff},
we obtain that after determining the network value of potential customers,
the problem of determining the resource allocation
that maximizes the expected network payoff is
similar to 
the problem of 
determining the resource allocation
that maximizes the 
expected intrinsic payoff. 
Therefore, in the following we restrict ourselves to this problem.

%
%
%
%

\subsection{Non-simultaneous allocations}

In this subsection, we prove that the intrinsic payoff problem is easy to solve
in the case where one marketing campaign can observe what
competing marketing campaigns are offering and after that
makes offers to potential customers.
Indeed, even in the case of two marketing campaigns,
if marketing campaign~$2$ could make offers after observing
the offers made by marketing campaign~$1$, then marketing campaign~$2$
will always be preferred by the most valuable potential customers.
For example, marketing campaign~$2$ could identify a small group
of potential customers who are the least valuable between those who are promised
strictly positive offers by
marketing campaign~$1$ (e.g. the $5\%$ of the distribution of marketing
campaign~$1$), and offer nothing to this group.
Then campaign~$2$ could offer to every other potential customer
slightly more than campaign~$1$ has promised him,
where the excess over campaign~$1$'s offers is financed
from the resources not given to the potential customers in
the first group. Every potential customer outside of the first
small group ($5\%$) would prefer marketing campaign~$2$,
who would win $95\%$ of the most valuable potential customers.
To avoid this simple outcome, we assume that
both of the marketing campaigns must make their marketing campaign
promises simultaneously.
(We may think of scenarios in which it is important to make the
first offers and in which there is a cost of delay by the response
to the first offers, but those scenarios are outside the scope of this work.)

\subsection{Family of scalable probability distributions}

We seek a solution that can be written as a family of offer
probability distributions with a scaling parameter.
We want that offers scale with the value (intrinsic or network value depending on the context) of the potential customers.
However, essentially we look for an offer distribution that has the same shape relative to this value.
We consider that the representative offer distribution~$F$ (the offer distribution with scale value~$1$)
has a probability density function~$f$ in a bounded support~$I$.
From the fundamental theorem of calculus, we have the following.
Let $I\subseteq\mathbb{R}$ be an interval and $\varphi:[a_1,b_1]\to I$
be a continuously differentiable function.
Suppose that $f:I\to\mathbb{R}$ is a continuous function.
Then
\begin{equation*}
\int_{\varphi(a_1)}^{\varphi(b_1)} f(x)\,dx = \int_{a_1}^{b_1}f(\varphi(t))\varphi'(t)\, dt. 
\end{equation*}

For potential customer~$n\in\mathcal{V}$, we use function \mbox{$\varphi(x)=x/v_n$} which is
continuosly differentiable and scales the offers by a factor~$v_n$
and therefore if the probability density function of
the representative offer density~$f(x)$ has support $[a,b]$
the scaled offer density is given by $f(x/v_n)/v_n$ and it has support $[v_na,v_nb]$.


We may represent marketing campaign $k$'s cumulative offer distribution
by a family of probability distributions, with representative cumulative offer distribution~$F^k(x;a,b)$,
where $F^k_n(x)=F^k(x/v_n;v_na,v_nb)$ denotes the fraction of potential customers
to whom marketing campaign $k$ will offer less than value $x$.
Each offer distribution for potential customer $n\in\mathcal{V}$
must have mean~$Bv_n/V$ and so
$F_n^k$ must be a non-decreasing function that satisfies
\begin{equation*}
\int_0^\infty x\,dF_n^k(x)=Bv_n/V,
\end{equation*}
as well as
$F_n^k(x)=0\quad\forall x\le0$,
and
\begin{equation*}
\lim_{x\to+\infty}F_n^k(x)=1.
\end{equation*}

\subsection{Symmetric equilibrium}

A symmetric equilibrium of the marketing campaign competition is a scenario
in which every marketing campaign is expected to use the
same offer distribution, and each marketing campaign
finds that using this offer distribution maximizes its
chances of winning when the
other marketing campaigns are also simultaneously and
independently allocating their offers according
to this distribution
(and all potential customers perceive that the $K$ marketing campaigns
have the same probability of winning the market).
In this work, we focus exclusively on finding
such symmetric equilibria.

In the following, we prove that
there is a symmetric equilibrium which corresponds to
a family of probability distributions
with scale parameter $v_n$ for 
potential customer $n\in\mathcal{V}$.
Let $F(x)=F(x;a,b)$ denote the representative cumulative distribution function acting as
the equilibrium strategy and
let $F_n(x):=F(x/v_n;v_na,v_nb)$ denote the cumulative distribution function
representing the equilibrium offer distribution for potential customer~$n\in\mathcal{V}$.
$F_n(x)$ denotes the cumulative probability that a given potential customer~$n$
will be offered less than~$x$ by any other given marketing campaign,
according to this equilibrium distribution.

Consider the situation faced by a given marketing campaign~$k$
when it chooses its offer distribution, assuming that every other
marketing campaign will use the equilibrium offer distribution.
When marketing campaign~$k$ offers $x$ to potential customer~$n$,
the probability that this marketing campaign $k$ will be ranked in position $j$
by potential customer~$n$ is given by $P(j,F_n(x))$ where we let
\begin{equation*}
P(j,q)={K-1\choose j-1}	q^{K-j}(1-q)^{j-1}.
\end{equation*}
That is, $P(j,q)$ denotes the probability that exactly $j-1$
of the $K-1$ competing marketing campaigns will offer more than $x$,
given that each other marketing campaign
has an independent probability $q$ of offering less than $x$
to this potential customer.
Equivalently, $P(j,q)$ denotes the probability that exactly $K-j$
of the $K-1$ competing marketing campaigns
will offer less than $x$.

If marketing campaign $k$ offers $x$ to potential customer~$n$,
then the expected value that this potential customer will give to this marketing campaign is $R_n(F_n(x))$
where
\begin{equation*}
R_n(q)=v_n\sum_{j=1}^K P(j,q)s_j.
\end{equation*}


Things could be more difficult if there were a positive probability
of other marketing campaigns offering exactly~$x$, but we can ignore
such complications because 
we will prove (see Lemma~\ref{lemma:sudan}) that the equilibrium distribution
cannot assign positive probability to any single point.
When all marketing campaigns independently use the same offer distribution,
they must all get the same expected score from potential customer~$n$
which must equal~$v_n/K$.



\begin{theorem}\label{theo:tembine}
In a $K$-marketing campaign competition
under the normalized rank-scoring rule $(s_1,s_2,\ldots,s_K)$
and values $(v_1,v_2,\ldots,v_N)$,
there is a unique scalable symmetric equilibrium
of the marketing campaigns' offer-distribution game.
In this equilibrium, each marketing campaign
chooses to generate offers according
to a family of probability distributions, with scale parameter
$v_n$ for potential customer $n$, that has support on the
interval from $0$ to $s_1KBv_n/V$,
and which has a cumulative distribution $F(\cdot)$
that satisfies the equation
\begin{equation*}
x=R_n(F_n(x))/(V/KB),\quad\forall x\in[0,s_1KBv_n/V].
\end{equation*}
\end{theorem}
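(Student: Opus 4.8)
The plan is to treat each campaign's problem as a budget-constrained maximization and to reduce the equilibrium requirement to a pointwise indifference condition via a Lagrange multiplier. Fix the candidate symmetric profile in which every competitor offers to customer~$n$ according to $F_n$, and consider the best response of campaign~$k$. Because offers are independent across customers (the infinite-population assumption), campaign~$k$ chooses, for each $n$, an offer distribution $G_n$ to maximize the expected total score $\sum_n\int R_n(F_n(x))\,dG_n(x)$ subject to the aggregate budget $\sum_n\int x\,dG_n(x)\le B$. Introducing a multiplier $\lambda>0$ for the budget, the relaxed objective separates across customers and offers as $\sum_n\int[R_n(F_n(x))-\lambda x]\,dG_n(x)+\lambda B$, so an optimal offer to customer~$n$ must maximize the scalar map $x\mapsto R_n(F_n(x))-\lambda x$ over $x\ge 0$.

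Next I would extract the equilibrium equation from the indifference principle. For $F_n$ to be a best response, every offer in its support must attain this maximum, so $R_n(F_n(x))-\lambda x$ is constant on the support. I would locate the bottom of the support at $x=0$: since $P(K,0)=1$ and $P(j,0)=0$ for $j<K$, we have $R_n(0)=v_n s_K=0$, so offering $0$ yields net value $0$, whereas any strictly positive offer lying below the support would incur cost $\lambda x>0$ for the same zero score. Hence the support is an interval $[0,\bar{x}_n]$ and the constant equals $0$, giving the stated relation $R_n(F_n(x))=\lambda x$, i.e.\ $x=R_n(F_n(x))/\lambda$, on the support.

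To turn this into a genuine cumulative distribution function I would prove that $q\mapsto R_n(q)$ is a continuous strictly increasing bijection from $[0,1]$ onto $[0,v_n s_1]$. Writing $p=1-q$ exhibits $R_n(q)/v_n=\sum_{i=0}^{K-1}\binom{K-1}{i}p^{i}(1-p)^{K-1-i}s_{i+1}$ as a Bernstein polynomial with coefficients $s_{i+1}$, and the standard Bernstein derivative identity yields
\begin{equation*}
\frac{d}{dq}R_n(q)=v_n(K-1)\sum_{i=0}^{K-2}\binom{K-2}{i}p^{i}(1-p)^{K-2-i}(s_{i+1}-s_{i+2}),
\end{equation*}
which is nonnegative since $s_1\ge\cdots\ge s_K$ and strictly positive on $(0,1)$ because $s_1>0=s_K$ forces a strict drop somewhere. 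Equivalently, raising $q$ stochastically lowers the number of competitors who outbid $k$, hence stochastically improves $k$'s rank, and since $j\mapsto s_j$ is non-increasing this strictly raises the expected score. Thus $R_n^{-1}$ exists and $F_n(x)=R_n^{-1}(\lambda x)$ is a continuous, strictly increasing CDF on $[0,\bar{x}_n]$ with $\bar{x}_n=v_n s_1/\lambda$; continuity shows it assigns no mass to any point, consistent with Lemma~\ref{lemma:sudan}.

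Finally I would pin down $\lambda$ through the mean constraint and verify equilibrium. Writing the mean as $\int_0^{\bar{x}_n}(1-F_n(x))\,dx$, the substitution $q=F_n(x)$, $x=R_n(q)/\lambda$, together with integration by parts (the boundary terms vanish because $R_n(0)=0$), reduces it to $\frac{1}{\lambda}\int_0^1 R_n(q)\,dq$. The Beta integral
\begin{equation*}
\int_0^1 P(j,q)\,dq=\binom{K-1}{j-1}\frac{(K-j)!\,(j-1)!}{K!}=\frac{1}{K}
\end{equation*}
holds for every $j$, so $\int_0^1 R_n(q)\,dq=v_n/K$ and the mean equals $v_n/(K\lambda)$; setting this to $Bv_n/V$ gives $\lambda=V/(KB)$ and $\bar{x}_n=s_1KBv_n/V$, matching the statement. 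With this $\lambda$ one checks $R_n(F_n(x))-\lambda x\le 0$ for all $x\ge 0$ with equality exactly on $[0,\bar{x}_n]$, so $F_n$ maximizes the relaxed objective while meeting the budget with equality; hence no campaign can profitably deviate, and the indifference characterization forces this $F_n$ to be the unique scalable symmetric equilibrium. The main obstacle is the conceptual reduction of the first two paragraphs—justifying that equilibrium is equivalent to pointwise constancy of $R_n(F_n(x))-\lambda x$ and that the support begins at $0$—while the strict monotonicity of $R_n$ and the Beta-integral evaluation, though essential, are routine.
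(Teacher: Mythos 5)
Your proposal is correct in substance and arrives at the same chain of facts as the paper, but by a genuinely different route. Where you invoke a Lagrange multiplier and reduce the best response to pointwise maximization of $R_n(F_n(x))-\lambda x$, the paper obtains the same conclusion by two elementary self-financing exchange arguments: one deviation (splitting the offers near the support minimum $z$ between some $y$ and $0$) forces $z=0$, and a second (splitting offers near $x$ between $y$ and $0$ in proportion $x/y$) forces $R_n(F_n(x))/x$ to be constant on the support. Those exchanges are precisely the complementary-slackness conditions you assert, so the paper in effect supplies the duality step you yourself flag as the ``main obstacle''; if you keep the Lagrangian framing you still need to write out why an optimal measure must concentrate on the argmax of $R_n(F_n(\cdot))-\lambda(\cdot)$, which for a linear objective with a single linear constraint amounts to exactly such an exchange argument. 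Your Bernstein-derivative proof that $R_n$ is strictly increasing is a clean alternative to the paper's telescoping identity $R_n(q)=v_n\sum_{j=2}^K(s_{j-1}-s_j)\sum_{m<j}P(m,q)$, and your evaluation of $\lambda$ via $\int_0^1 P(j,q)\,dq=1/K$ is a more computational but self-contained substitute for the paper's one-line observation that the symmetric expected score must be $v_n/K$, whence $v_n/K=\int\alpha x\,dF_n(x)=\alpha Bv_n/V$. Your final verification coincides with the paper's Lemma~\ref{lemma:doumbodo}.

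The one point you gloss over that the paper treats explicitly is Lemma~\ref{lemma:sudan}: the identity ``the expected score of an offer $x$ equals $R_n(F_n(x))$'' is only valid when the competitors' common distribution has no atoms, since an atom at $x$ creates ties with positive probability and the rank is then not determined by $F_n(x)$ alone. Your remark that the constructed $F_n$ is continuous covers the existence half, but for the uniqueness half you must first rule out atoms in an arbitrary candidate symmetric equilibrium (the paper's $\varepsilon$-overbidding argument), and this is also what licenses writing $F_n(z)=0$ at the support minimum in your second paragraph. This is a fixable omission rather than a wrong step.
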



The proof follows the steps of Theorem~$2$ in~\cite{myerson1993}.
The following is a constructive proof and we decompose the proof in the next following lemmas.

\begin{lemma}\label{lemma:sudan}
If there is a symmetric equilibrium distribution of offers, it must be continuous,
i.e. it cannot have any points of positive probability.
\end{lemma}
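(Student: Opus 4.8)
The plan is to argue by contradiction: at any putative mass point I would exhibit a profitable upward deviation, which is incompatible with equilibrium. Suppose the symmetric equilibrium offer distribution $F_n$ for some potential customer $n$ places positive probability $\delta>0$ on a single point $x_0\ge 0$, and write $q^-=\lim_{x\uparrow x_0}F_n(x)$ and $q^+=F_n(x_0)=q^-+\delta$. Since competitors draw offers independently from $F_n$, a given competitor offers strictly below $x_0$ with probability $q^-$, exactly $x_0$ with probability $\delta$, and strictly above $x_0$ with probability $1-q^+$.

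The first ingredient I would establish is that $R_n$ is strictly increasing. Writing $R_n(q)/v_n=\sum_{j=1}^K P(j,q)s_j$ and reindexing by the number $i=K-j$ of competitors that campaign~$k$ outbids, this expectation equals $\mathbb{E}[s_{K-B}]$ with $B$ distributed as $\mathrm{Binomial}(K-1,q)$. As $q$ grows the law of $B$ increases stochastically and $b\mapsto s_{K-b}$ is nondecreasing because the ranking points satisfy $s_1\ge\cdots\ge s_K$; hence $R_n$ is nondecreasing, and it is strictly increasing since $s_1>s_K=0$ forbids the points from being all equal. In particular $R_n(q^+)>R_n(q^-)$.

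Next I would compare the expected score of offering exactly $x_0$ with that of offering $x_0+\varepsilon$. The offer $x_0+\varepsilon$ outbids the entire mass $\delta$ at $x_0$, so, taking $\varepsilon$ among the continuity points of $F_n$, its expected score equals $R_n(F_n(x_0+\varepsilon))$, which decreases to $R_n(q^+)$ as $\varepsilon\downarrow 0$. Offering exactly $x_0$ instead ties with the mass $\delta$, and under the symmetric resolution of ties forced by a symmetric equilibrium campaign~$k$ cannot be ranked above all of one or more tied competitors; hence its expected score $\tilde R$ satisfies $\tilde R<R_n(q^+)$, leaving a fixed positive gap $c:=R_n(q^+)-\tilde R$, strictly positive because $s_1>s_K=0$ makes outbidding the atom genuinely valuable.

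Finally I would invoke the equilibrium condition. Each campaign maximizes its expected score subject to the mean-offer constraint $\int x\,dF_n=Bv_n/V$, so by the associated Lagrangian there is a multiplier $\lambda>0$ for which the support of $F_n$ lies in the set of maximizers of $h(x):=R_n(F_n(x))-\lambda x$ (only $\lambda>0$ is needed here; its value $\lambda=V/KB$ is what Theorem~\ref{theo:tembine} ultimately pins down). A mass point belongs to the support, so $x_0$ would have to maximize $h$; but $h(x_0)=\tilde R-\lambda x_0$ whereas $\lim_{\varepsilon\downarrow 0}h(x_0+\varepsilon)=R_n(q^+)-\lambda x_0=h(x_0)+c$, so $h(x_0+\varepsilon)>h(x_0)$ for all small $\varepsilon$ and $x_0$ cannot be optimal. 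This contradiction rules out any mass point, including $x_0=0$ where $q^-=0$, so the argument is uniform in the atom's location and $F_n$ must be continuous. I expect the one delicate step to be the strict gap $c>0$: it is exactly the jump that an atom injects into $R_n(F_n(\cdot))$, and the lemma turns on this non-vanishing score gain dominating the vanishing extra cost $\lambda\varepsilon$.
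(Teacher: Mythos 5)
Your overall strategy is the right one and matches the paper's in spirit: an atom at $x_0$ creates ties with positive probability, outbidding the atom by an arbitrarily small $\varepsilon$ produces a discrete jump in expected score (your gap $c=R_n(q^+)-\tilde R>0$), and this jump cannot survive in equilibrium. Your stochastic-dominance argument for the monotonicity of $R_n$ is also fine (the paper proves this separately via a summation-by-parts identity). However, there is one genuine gap: the step where you assert that ``by the associated Lagrangian there is a multiplier $\lambda>0$ for which the support of $F_n$ lies in the set of maximizers of $h(x)=R_n(F_n(x))-\lambda x$.'' This is not an off-the-shelf fact: the optimization is a linear program over probability measures with a moment constraint, and the existence of such a multiplier is precisely the kind of structural conclusion that, in this paper, is \emph{derived} in the later lemmas (the ones establishing $R_n(F_n(x))=\alpha x$ with $\alpha=V/KB$) --- and those lemmas in turn presuppose the continuity you are trying to prove here. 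As written, your argument either imports an unproven duality theorem or is circular relative to the paper's development.

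The fix is the paper's own device, which makes the deviation explicitly self-financing and dispenses with any multiplier: among the group of customers who were to be offered $x_0$, raise the offer to $x_0+\varepsilon$ for a fraction $1-\eta$ of them and lower it to $0$ for the remaining fraction $\eta$, with $\eta$ of order $\varepsilon/x_0$ so the budget is unchanged. The per-customer gain on the raised fraction is at least your fixed $c>0$, while the per-customer loss on the demoted fraction is bounded by $s_1v_n$, so the net change is positive once $\varepsilon$ is small --- no $\lambda$ needed. Note that this construction requires $x_0>0$ (there must be resources to confiscate), which is the case the paper's proof treats; your claim that the argument is ``uniform in the atom's location'' and covers $x_0=0$ rests entirely on the unjustified Lagrangian step, so an atom at $0$ would need a separate (easy, but distinct) deviation in which the $\varepsilon$ raise is financed from offers elsewhere in the distribution.
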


\begin{proof}
If all marketing campaigns used a representative offer distribution $F(\cdot)$
that assigned a positive probability $\delta$ to some point $x>0$,
then there would be a positive fraction $\delta^K$ of potential customers
who would be exactly indifferent among the marketing campaigns
since they receive from each of them the same offer.
Any marketing campaign could then increase his average point score among this group
by giving an arbitrarily small increase (say, $\varepsilon$)
to most of the potential customers to whom he was going to offer $x$
and the cost of this increase could be financed by moving
an arbitrarily small fraction of this group down to zero.
In other words, if the offer distribution had a positive mass
at some point, then a marketing campaign could gain
a positive group of potential customers by a transfer
of resources that would lower his score from
only an arbitrarily small number of potential customers.
\end{proof}


\begin{lemma}
We have that
\begin{equation*}
R_n(0)=s_Kv_n=0,\quad R_n(1)=s_1v_n,
\end{equation*}
and $R_n(\cdot)$ is a continuous and strictly increasing function over the interval
from $0$ to $1$.
\end{lemma}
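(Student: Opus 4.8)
The plan is to treat the three assertions separately: the two boundary evaluations and continuity are immediate, while strict monotonicity is the substantive part, which I would obtain by differentiating $R_n$ and rewriting the derivative through a summation-by-parts.

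First I would compute the endpoints directly from the definition $R_n(q)=v_n\sum_{j=1}^K P(j,q)s_j$ with $P(j,q)=\binom{K-1}{j-1}q^{K-j}(1-q)^{j-1}$. At $q=0$ every factor $q^{K-j}$ vanishes except when $j=K$, so $P(K,0)=1$ and $P(j,0)=0$ for $j<K$, giving $R_n(0)=v_n s_K=0$, the last equality being the normalization $s_K=0$. Symmetrically, at $q=1$ the factor $(1-q)^{j-1}$ vanishes except when $j=1$, so $P(1,1)=1$ and $R_n(1)=v_n s_1$. Continuity (indeed smoothness) on $[0,1]$ is automatic, since $R_n$ is a polynomial in $q$.

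For strict monotonicity, recognizing $P(j,q)=\binom{K-1}{j-1}(1-q)^{j-1}q^{(K-1)-(j-1)}$ as the binomial pmf $b(j-1;K-1,1-q)$, I would use the standard identity $\frac{d}{dp}b(i;m,p)=m\bigl[b(i-1;m-1,p)-b(i;m-1,p)\bigr]$, proved by differentiating and applying $i\binom{m}{i}=m\binom{m-1}{i-1}$ and $(m-i)\binom{m}{i}=m\binom{m-1}{i}$. Since $p=1-q$ contributes a factor $-1$ under $d/dq$, this turns $R_n'(q)$ into a telescoping sum; after an Abel summation and reindexing, the derivative collapses to
\[
R_n'(q)=(K-1)\,v_n\sum_{\ell=0}^{K-2}\bigl(s_{\ell+1}-s_{\ell+2}\bigr)\,b(\ell;K-2,1-q).
\]
I expect this rearrangement to be the main obstacle, mostly in bookkeeping the index shifts and the boundary terms correctly.

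The conclusion then follows from the sign structure of this expression. Each weight $s_{\ell+1}-s_{\ell+2}\ge0$ because the rank-scores are non-increasing, and each binomial term $b(\ell;K-2,1-q)$ is strictly positive for $q\in(0,1)$ since $0<1-q<1$; hence $R_n'(q)\ge0$. To upgrade to \emph{strict} positivity I would invoke the normalization: since $s_1\ge\cdots\ge s_K=0$ with $\sum_j s_j=1$ we have $s_1>0=s_K$, so at least one consecutive gap $s_{\ell+1}-s_{\ell+2}$ is strictly positive, making the corresponding term contribute a strictly positive amount for every $q\in(0,1)$. Therefore $R_n'(q)>0$ on $(0,1)$, and combined with continuity on $[0,1]$ this shows that $R_n$ is strictly increasing there.
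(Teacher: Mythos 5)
Your proof is correct and follows essentially the same route as the paper: endpoints by direct evaluation of $P(j,0)$ and $P(j,1)$, continuity from polynomiality of $R_n$, and monotonicity via an Abel summation that exposes the nonnegative gaps $s_{j-1}-s_j$, at least one of which is positive. The only difference is that you perform the summation by parts on $R_n'$ (obtaining a nonnegative combination of binomial probability mass functions, each strictly positive on $(0,1)$), whereas the paper performs it on $R_n$ itself (obtaining a nonnegative combination of binomial tail probabilities $\sum_{m<j}P(m,q)$, each strictly increasing in $q$); your variant has the minor advantage of making the \emph{strictness} of the monotonicity fully explicit, which the paper asserts but does not verify in detail.
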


\begin{proof}
These equations hold because $P(j,0)$ equals $0$ unless $j$ equals $K$,
$P(j,1)$ equals $0$ unless $j$ equals $1$, and $P(K,0)=1=P(1,1)$.
Continuity of $R_n(\cdot)$ follows directly from the formulas,
because $R_n(q)$ is polynomial in~$q$.
Let us show that $R_n(\cdot)$ is increasing.
First, we verify that
\begin{equation*}
R_n(q)=v_n\sum_{j=2}^K (s_{j-1}-s_j)\sum_{m<j} P(m,q),
\end{equation*}
using $s_K=0$.
We observe that
$\sum_{m<j}P(m,q)$
denotes the probability that more than $K-j$ other marketing campaigns
have made offers in an interval of probability $q$,
and this probability must be a strictly increasing
function of~$q$.
The ordering of the $s_j$ values guarantees that at least one term
in this $R_n(q)$ expression must have a positive $(s_{j-1}-s_j)$
coefficient, and none can be negative. Therefore $R_n(\cdot)$ is an increasing function.
\end{proof}


\begin{lemma}
The lowest permissible offer~$0$ must be in the support
of the equilibrium distribution of offers.
\end{lemma}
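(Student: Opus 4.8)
The plan is to argue by contradiction, assuming that $0$ is \emph{not} in the support of the equilibrium offer distribution for potential customer~$n$. Since offers are nonnegative and the support is a closed set, this assumption means that its infimum is some value $a>0$, so that $F_n(x)=0$ for every $x<a$; moreover, by Lemma~\ref{lemma:sudan} the equilibrium distribution has no atoms, hence $F_n(a)=0$ as well. First I would record the crucial consequence of this assumption: if a campaign offers any $x\in[0,a]$ to potential customer~$n$ while every competitor draws from the equilibrium distribution, then no competitor ever offers less than $a$, so this campaign is ranked last by potential customer~$n$ with probability one. Its expected score from that customer is therefore $R_n(F_n(x))=R_n(0)=s_Kv_n=0$. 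In other words, every offer in the entire interval $[0,a]$ buys exactly the same (zero) expected score, while costing the campaign the positive amount~$x$.

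Next I would exhibit a profitable deviation. By definition of the infimum of the support, the equilibrium distribution places strictly positive mass, say $\mu_\delta>0$, on each interval $[a,a+\delta]$. Consider a campaign that follows the equilibrium everywhere except that it relocates this mass down to the single point~$0$. Every offer in $[a,a+\delta]$ and the offer $0$ yield expected score between $R_n(0)=0$ and $R_n(F_n(a+\delta))$, and by continuity of $R_n$ together with $F_n(a)=0$ we have $R_n(F_n(a+\delta))\to0$ as $\delta\to0$. Hence this relocation lowers the total expected score by at most $R_n(F_n(a+\delta))\,\mu_\delta$, while it frees at least $a\,\mu_\delta>0$ of budget, since the relocated offers cost at least $a$ each and now cost nothing. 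The key ratio, score lost per freed dollar, is thus at most $R_n(F_n(a+\delta))/a$, which tends to $0$: the budget spent at the bottom of the support is essentially wasted.

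I would then conclude by redeploying the freed budget. Because $R_n(\cdot)$ is continuous and strictly increasing on $(0,1)$, taking a small fraction of the customers who currently receive offers $y$ in the interior of the support and raising those offers slightly strictly increases the associated rank score, and does so at a marginal rate bounded away from zero; equivalently, the equilibrium budget constraint $\int_0^\infty x\,dF_n(x)=Bv_n/V$ must bind, so any slack can be converted into a strictly positive score gain. For $\delta$ small enough the gain obtained by reinvesting the budget $a\,\mu_\delta$ strictly exceeds the vanishing loss $R_n(F_n(a+\delta))\,\mu_\delta$ from the relocation step, producing a net strict increase in expected score at the same total expenditure. This contradicts the assumption that $F_n$ was an equilibrium best response, forcing $\inf(\text{support})=0$; since the support is closed, $0$ belongs to it.

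The hard part will be the final exchange-of-budget step: making precise that a dollar of offer placed at the bottom of the support purchases essentially no score (because $R_n(0)=0$, so the reward curve is flat there), whereas the same dollar reinvested at a point where $R_n\circ F_n$ has positive slope purchases a strictly positive score. Quantifying these two marginal score-per-dollar rates and choosing $\delta$ so that the reinvestment gain dominates the relocation loss is the delicate point; it is precisely the no-atom property of Lemma~\ref{lemma:sudan} that lets me treat the relocated mass as continuously spread and avoid any tie-breaking complications at the boundary offer~$a$.
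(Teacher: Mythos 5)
Your proposal is the same contradiction-by-deviation argument the paper uses: assume the infimum $z$ of the support is positive, observe that offers near the bottom of the support earn essentially nothing because $R_n(F_n(z+\varepsilon))\to R_n(0)=0$ as $\varepsilon\to 0$, and show that redeploying that nearly wasted budget is strictly profitable. Your first two steps (zero score on $[0,z]$, relocating the mass on $[z,z+\delta]$ to $0$ loses at most $R_n(F_n(z+\delta))\mu_\delta$ while freeing at least $z\mu_\delta$ of budget) are sound. The one soft spot is the redeployment step: you reinvest by ``taking a small fraction of the customers who currently receive offers $y$ in the interior of the support and raising those offers slightly,'' claiming a marginal score-per-dollar ``bounded away from zero.'' That marginal rate is the local slope of $R_n\circ F_n$ at $y$, which you do not control: $F_n$ is only known to be continuous (Lemma~\ref{lemma:sudan}), so it could be locally flat just above $y$ (a gap in the support) or have vanishing density there, in which case a slight raise buys essentially nothing per dollar. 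The paper sidesteps this by making the deviation self-financing within the relocated group itself: the customers who were to be offered between $z$ and $z+\varepsilon$ cost at least $z$ each, so a $z/y$ fraction of them can be given the \emph{fixed} offer $y$ (the rest going to $0$), yielding an average score of at least $(z/y)R_n(F_n(y))$ for the group versus at most $R_n(F_n(z+\varepsilon))$ under the equilibrium; the resulting inequality $z\le y\,R_n(F_n(z+\varepsilon))/R_n(F_n(y))$ forces $z=0$, with no appeal to local slopes. Replacing your ``raise slightly'' step with a jump to a fixed $y$ in the support (score-per-dollar $R_n(F_n(y))/y>0$, independent of the local behavior of $F_n$) closes your argument and makes it coincide with the paper's.
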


\begin{proof}
The main idea is that, if the minimum of the support were strictly
greater than zero, then a marketing campaign would
be devoting positive resources to potential customers near
the minimum of the support of the distribution.
He would expect to get almost no value ($s_K=0$)
from these potential customers, because all other marketing campaigns would almost surely
be promising them more.
Thus, it would be better to reduce the offers to $0$
for most of these potential customers in order to make serious offers
for at least some of them.

The above argument can be formalized as follows.
Because, as we have shown before, there are no points of positive probability,
the cumulative offer distribution $F_n(\cdot)$ for potential customer~$n$ is continuous.
Let $z$ denote the minimum of the support of the equilibrium offer distribution for potential customer~$n$,
so $F_n(z)=0$ but $F_n(z+\varepsilon)>0$ for all positive~$\varepsilon$.
%
Now, select any fixed $y$ such that $y>z$ and $F_n(y)>0$.
For any $\varepsilon$ such that $0<\varepsilon<y-z$,
a marketing campaign might consider deviating from the equilibrium
by promising either $y$ or $0$ to each potential customer~$n$
in the group of potential customers whom he was supposed to offer between $z$ and $(z+\varepsilon)$,
according to his $F_n$-distributed random-offer generator.
The potential customers in this group were going to be given
offers that averaged some amount between $z$ and $(z+\varepsilon)$,
so he can offer $y$ dollars to at least a $z/y$ fraction of these potential customers
without changing his offers to any other potential customer.
Among this $z/y$ fraction of the group,
he would get an average point score of $R_n(F_n(y))$,
by outbidding the other marketing campaigns who are using
the $F_n$ distribution;
so the deviation would get him an average point score
of at least $(z/y)R_n(F_n(y))$ from this group of potential customers
(the potential customers moved down to zero in this deviation
would give him $s_Kv_n=0$ points).
If he follows the equilibrium,
however, he gets at most $R_n(F_n(z+\varepsilon))$
as his average point score from this group of potential customers.
So to deter such a deviation, we must have
$(z/y)R_n(F_n(y))\le R_n(F_n(z+\varepsilon))$,
and so 
\begin{equation*}
z\le y\frac{R_n(F_n(z+\varepsilon))}{R_n(F_n(y))}.
\end{equation*}
But $R_n(F_n(z+\varepsilon))$ goes to $R_n(F_n(z))=R_n(0)=0$
as $\varepsilon$ goes to $0$, and so $z$ must equal $0$.
\end{proof}

\begin{lemma}
There is some positive constant $\alpha$ such that
\begin{equation*}
R_n(F_n(x))=\alpha x.
\end{equation*}
\end{lemma}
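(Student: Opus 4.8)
The goal is to show that the equilibrium condition forces $R_n(F_n(x))$ to be exactly linear in $x$ on the support of the distribution, with some positive proportionality constant $\alpha$. The strategy is an indifference argument: in a symmetric equilibrium every marketing campaign must be indifferent among all offer levels $x$ that lie in the support of $F_n$, because the campaign is mixing over these levels and a mixed strategy can only be optimal if each pure action in its support yields the same marginal payoff per unit of resource.

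\medskip

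The plan is to examine the tradeoff a campaign faces when it shifts an infinitesimal amount of budget from offering one value $x_1$ to offering a slightly larger value $x_2$, both in the support. Offering $x$ to customer $n$ costs $x$ in expected resources and yields expected point score $R_n(F_n(x))$ by the earlier definition of $R_n$. If the ratio $R_n(F_n(x))/x$ were strictly larger at some support point $x_2$ than at another support point $x_1$, the campaign could move mass from $x_1$-type offers toward $x_2$-type offers (financing the increased cost by reducing offers elsewhere down to $0$, which by $s_K = 0$ costs no points), strictly increasing its total score without violating the budget constraint. This would contradict the assumption that $F_n$ is an equilibrium strategy. First I would formalize this as: the per-resource return $R_n(F_n(x))/x$ must be constant across all $x$ in the support; call this constant $\alpha$. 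I would also use the previous lemma, which established that $0$ is in the support and $R_n(0) = R_n(F_n(0)) = 0$, so the linear relation passes through the origin and is consistent with the $x = 0$ endpoint.

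\medskip

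To pin down that $\alpha$ is strictly positive (not merely nonnegative), I would invoke the preceding lemma showing $R_n(\cdot)$ is strictly increasing together with continuity of $F_n$ from Lemma~\ref{lemma:sudan}: since the support contains points $x > 0$ with $F_n(x) > 0$, we have $R_n(F_n(x)) > R_n(0) = 0$ there, forcing $\alpha = R_n(F_n(x))/x > 0$. The remaining subtlety is to rule out gaps in the support, i.e. to argue that the relation $R_n(F_n(x)) = \alpha x$ holds for \emph{all} $x$ in the relevant interval and not just on a possibly-disconnected support set. I expect this to be the main obstacle: one must argue that if there were an interval $(x_1, x_2)$ receiving zero probability while both endpoints are in the support, then offering any value in that gap would be strictly dominated or would itself have to satisfy the same linear relation by a continuity/no-profitable-deviation argument, so the support is in fact a full interval $[0, s_1 K B v_n / V]$. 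I would close this by combining the indifference condition with the strict monotonicity of $R_n$ and the mean constraint $\int_0^\infty x \, dF_n^k(x) = B v_n / V$, which together force the constant $\alpha = V/(KB)$ and thereby recover the support endpoint claimed in Theorem~\ref{theo:tembine}.
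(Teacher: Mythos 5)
Your proposal is correct and follows essentially the same route as the paper: a self-financing deviation that reallocates mass between two support points $x<y$ (pushing the residual fraction down to $0$, which costs nothing since $s_K=0$ and $R_n(0)=0$), whose two-sided no-improvement condition forces the per-dollar return $R_n(F_n(x))/x$ to be constant on the support. The extra points you raise (positivity of $\alpha$, gaps in the support, and the value $\alpha=V/(KB)$) are handled in the paper by the preceding monotonicity lemma and the subsequent lemma rather than here, but they do not change the argument.
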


\begin{proof}
Let $x$ and $y$ be any two numbers in the support
of the equilibrium distribution for potential customer~$n$ such that \mbox{$0<x<y$}.
A marketing campaign could deviate by taking a group
of potential customers to whom he is supposed to give offers
close to $x$, according to his equilibrium plan,
and instead he could give them offers close to $y$
to an $x/y$ fraction of this group and he could offer $0$
to the remaining $(1-x/y)$ fraction.
Because the support of the representative distribution
contains $0$ as well as $x$ and $y$,
neither this self-financing deviation
nor its reverse (offering close to $x$ to
a group of potential customers of whom an $x/y$ fraction were
supported get close to $y$,
and the remaining $(1-x/y)$ fraction
were supposed to get close to~$0$)
should increase the marketing campaign's expected average point
score from this group of potential customers.
Thus, we must have
\begin{equation*}
R_n(F_n(x))=(x/y)R_n(F_n(y))+(1-x/y)R_n(F_n(0)).
\end{equation*}
But $R_n(F_n(0))=R_n(0)=0$,
so we obtain \[ \frac{R_n(F_n(x))}{x}=\frac{R_n(F_n(y))}{y}, \]
for all $x$ and $y$
in the support of the equilibrium offer distribution for potential customer~$n$.
So there is some positive constant $\alpha$ such that,
for all $x$ in the support of the offer distribution for potential customer~$n$,
$R_n(F_n(x))=\alpha x$.
\end{proof}

\begin{lemma}
We have that the constant $\alpha=V/KB$.
\end{lemma}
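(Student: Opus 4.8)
The plan is to pin down $\alpha$ by combining the linear identity $R_n(F_n(x))=\alpha x$ from the previous lemma with the two quantitative facts we already have about the equilibrium offer distribution for potential customer~$n$: its prescribed mean and the expected score it produces. No fixed-point or existence argument is needed here; it is purely a matter of matching moments.

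First I would recall that in a symmetric equilibrium every marketing campaign draws its offer to potential customer~$n$ from the same distribution $F_n$, so, as already observed, each of the $K$ campaigns must obtain the same expected score from potential customer~$n$, namely $v_n/K$ (the total value $v_n$ split evenly by symmetry). I would then write this expected score explicitly: a campaign that offers $x$ to potential customer~$n$ earns expected score $R_n(F_n(x))$, and since its own offer $x$ is itself drawn from $F_n$ in equilibrium, the total expected score is
\[
\int_0^\infty R_n(F_n(x))\,dF_n(x)=\frac{v_n}{K}.
\]

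Next I would substitute the identity $R_n(F_n(x))=\alpha x$, turning the left-hand side into $\alpha\int_0^\infty x\,dF_n(x)$. The remaining integral is exactly the mean of the equilibrium offer distribution, which the budget/credibility constraint fixes at $Bv_n/V$. Combining these gives
\[
\alpha\cdot\frac{Bv_n}{V}=\frac{v_n}{K},
\]
and solving yields $\alpha=V/(KB)$, as claimed.

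The one point requiring a little care — rather than a genuine obstacle — is justifying the replacement of $\int R_n(F_n(x))\,dF_n(x)$ by $\alpha\int x\,dF_n(x)$, i.e.\ that the previous lemma's linear relation holds for $F_n$-almost every $x$. Since $F_n$ places all its mass on (the closure of) the support, where $R_n(F_n(x))=\alpha x$ holds, and $F_n$ has no atoms by Lemma~\ref{lemma:sudan}, this causes no difficulty. As a consistency check, evaluating the linear relation at the top of the support, where $F_n=1$ and $R_n(1)=s_1v_n$, gives an upper endpoint of $s_1v_n/\alpha=s_1KBv_n/V$, which recovers precisely the support interval asserted in Theorem~\ref{theo:tembine}.
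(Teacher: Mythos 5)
Your proof is correct and follows essentially the same route as the paper: equate the equilibrium expected score $\int R_n(F_n(x))\,dF_n(x)=v_n/K$ with $\alpha\int x\,dF_n(x)=\alpha Bv_n/V$ using the mean constraint, and solve for $\alpha$. The extra remarks on almost-everywhere validity and the support endpoint are fine but not needed beyond what the paper already does.
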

\begin{proof}
The mean offer must equal $Bv_n/V$ under the $F_n$ distribution, therefore
\begin{equation*}
\int_0^{s_1v_n/\alpha} x\,dF_n(x)=B\frac{v_n}{V}.
\end{equation*}

We also know that a marketing campaign who uses the same offer distribution $F_n$ as all the other marketing campaigns must expect the average point score $v_n/K$, so
\begin{align*}
\frac{v_n} K&=\int_0^{s_1v_n/\alpha} R_n(F_n(x))\,dF_n(x)=\int_0^{s_1v_n/\alpha} \alpha x\, dF_n(x)\\&=\alpha B\frac{v_n}{V}.
\end{align*}
\end{proof}

From the previous lemma, the support of the $F_n$ distribution is the interval from~$0$ to $s_1v_n/\alpha=s_1KBv_n/V$,
and the cumulative distribution satisfies the formula
\begin{equation*}
R_n(F_n(x))=\frac{V}{KB}x,\quad\forall x\in[0,s_1KBv_n/V].
\end{equation*}


\begin{lemma}\label{lemma:doumbodo}
$F_n$ is an equilibrium.
\end{lemma}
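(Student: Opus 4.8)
The plan is to verify the \emph{sufficiency} direction missing from the preceding lemmas: those lemmas showed that \emph{if} a symmetric equilibrium exists it must satisfy $R_n(F_n(x))=(V/KB)x$ on the support $[0,s_1KBv_n/V]$, but they did not rule out a profitable unilateral deviation. So I would fix a single campaign, assume every other campaign draws its offers from the family $\{F_n\}_{n\in\mathcal{V}}$, and show that no offer distribution the deviating campaign can choose yields a total expected score exceeding $V/K$. Since playing $\{F_n\}$ itself delivers exactly $V/K$ (each customer $n$ contributing $v_n/K$, as established in the lemma fixing $\alpha=V/KB$), this certifies $F_n$ as a best response to itself, hence an equilibrium. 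Throughout I write $\alpha=V/KB$.

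First I would record the single inequality that drives everything: for every customer $n$ and every offer $x\ge0$,
\begin{equation*}
R_n(F_n(x))\le \alpha x,
\end{equation*}
with equality precisely on the support $[0,s_1KBv_n/V]$. On the support this is exactly the constructed equation $R_n(F_n(x))=(V/KB)x$. At $x=0$ both sides vanish. For $x$ strictly above the top of the support we have $F_n(x)=1$, so $R_n(F_n(x))=R_n(1)=s_1v_n=\alpha\,(s_1KBv_n/V)<\alpha x$. Thus the return map $x\mapsto R_n(F_n(x))$ climbs linearly with slope $\alpha$ across the support and then stays flat at $s_1v_n$, so that each dollar offered to customer $n$ earns at most the constant rate $\alpha$, and \emph{strictly} less than $\alpha$ once the offer exceeds $s_1KBv_n/V$.

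Next I would turn this pointwise bound into a bound on an arbitrary deviation. Because the equilibrium distribution is continuous (Lemma~\ref{lemma:sudan}), ties among campaigns occur with probability zero, so the expected score the deviating campaign collects from customer $n$ by offering $x$ is exactly $R_n(F_n(x))$. If it uses any offer distribution $G_n$ for each customer $n$, subject only to the total budget $\sum_{n\in\mathcal{V}}\int_0^\infty x\,dG_n(x)\le B$, its total expected score obeys
\begin{equation*}
\sum_{n\in\mathcal{V}}\int_0^\infty R_n(F_n(x))\,dG_n(x)\le \alpha\sum_{n\in\mathcal{V}}\int_0^\infty x\,dG_n(x)\le \alpha B=\frac{V}{K}.
\end{equation*}
Hence no deviation beats $V/K$, which $\{F_n\}$ attains, so $\{F_n\}$ is optimal. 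Equality here forces $G_n$ to be supported where $R_n(F_n(x))=\alpha x$, namely on $[0,s_1KBv_n/V]$ with the credible per-customer mean $Bv_n/V$; $F_n$ is one such choice.

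The main obstacle is the second step: promoting the equality $R_n(F_n(x))=\alpha x$, which the earlier lemmas guarantee only \emph{on} the support, to the global inequality $R_n(F_n(x))\le\alpha x$. The real content is the strict inequality for offers above the support, since that is exactly what forbids a campaign from profitably overbidding a customer; it rests on $R_n$ being increasing with $R_n(1)=s_1v_n$ and on the support terminating precisely at $s_1KBv_n/V$, so the flat part of the return sits strictly below the line $\alpha x$. Once this global bound is in hand, the equilibrium verification reduces to the one-line linear estimate displayed above, and I would close by noting that $\{F_n\}$ realizes the bound with equality, completing the proof of Theorem~\ref{theo:tembine}.
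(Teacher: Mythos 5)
Your argument is correct and follows essentially the same route as the paper's proof: the global bound $R_n(F_n(x))\le (V/KB)\,x$ for all $x\ge 0$ (with strict inequality above the top of the support) followed by integration of this bound against an arbitrary deviation $G_n$. The only difference is that you impose the aggregate budget constraint $\sum_{n\in\mathcal{V}}\int_0^\infty x\,dG_n(x)\le B$ and bound the total score by $V/K$, whereas the paper checks each customer separately under the per-customer mean constraint $\int_0^\infty x\,dG_n(x)=Bv_n/V$; your version additionally rules out deviations that reallocate budget across customers, a mild strengthening of the same argument.
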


\begin{proof}
In general, for any nonnegative~$x$, we have $R_n(F_n(x))\le\frac{V}{KB}x$,
because when $x>s_1KBv_n/V$ $R_n(F_n(x))=R_n(1)=s_1v_n<\frac{V}{KB}x$.
So using any other distribution $G_n$, that has mean $Bv_n/V$
for potential customer~$n\in\mathcal{V}$ and is on the nonnegative numbers,
would give to a marketing campaign an expected score
\begin{align*}
\int_0^\infty &R_n(F_n(x))\,dG_n(x)\le\int_0^\infty\frac{V}{KB} x\,dG_n(x)\\
&=\frac V {KB}\int_0^\infty x\,dG_n(x)=\frac{v_n}{K},
\end{align*}
with equality if the support of $G_n$ is contained in the interval $[0,s_1KBv_n/V]$.
Thus, no marketing campaign can increase his expected score by deviating from $F_n$ to
some other distribution, when all other marketing campaigns are using the distribution~$F_n$.
\end{proof}

Lemmas~\ref{lemma:sudan}-\ref{lemma:doumbodo} are the proof of Theorem~\ref{theo:tembine}.
The previous theorem provides us a method to obtain explicitly
the cumulative offer distribution functions under different ranking-scores.

\section{SIMULATIONS}\label{sec:simulations}

\subsection*{Winner-takes-all}

We notice that our problem is more general than
a simple pairwise competition between marketing campaigns.
For the pairwise competition there already exists a solution (see e.g.~\cite{SchwartzLS2014}).
However, 
a pairwise competition is not always what is needed.
For example, consider the case when each customer chooses
only one marketing campaign to buy a product
(it could be for example buying a house, in which most
of the potential customers will buy only one house).
To see this, consider the example of three
competing marketing campaigns $X$, $Y$, and $Z$
and five equally valuable customers (for the sake of simplification).
Consider the pure strategies
\begin{align*}
{\bf x}&=(0.2,0.2,0.2,0.2,0.2),\\
{\bf y}&=(0.0,0.0,0.0,0.5,0.5),\\
{\bf z}&=(0.5,0.5,0.0,0.0,0.0).
\end{align*}
In that case, the pairwise competition gives
that marketing campaign $X$ captures $3$ out of $5$ potential customers to $Y$
(the first three),
and that $X$ captures $3$ out of $5$ potential customers to $Z$ (the last three),
thus winning in a pairwise competition against both marketing competitors.
However, since each customer will only choose
one product, the final outcome will be
$2$ customers for $Y$, $2$ customers for $Z$,
while only $1$ customer for $X$.

The case where the objective is to be the first evaluated marketing campaign
and being second does not provide any value can be represented as follows:
\begin{equation*}
s_1=1,\quad s_2=0,\quad\ldots,\quad s_K=0.
\end{equation*}

In that case $R_n(q)=v_n P(1,q)=v_n q^{K-1}$.
Therefore from Theorem~\ref{theo:tembine} the equilibrium cumulative distribution satisfies
\begin{equation*}
x=(F(x))^{K-1}KBv_n/V,\quad\forall x\in[0,KBv_n/V],
\end{equation*}
and thus
\begin{equation*}
F(x)=\left(\frac{x}{KBv_n/V}\right)^{1/(K-1)},\quad\forall x\in[0,KBv_n/V].
\end{equation*}

When $K=2$ we recover the result of~\cite{SchwartzLS2014} for pairwise competition.
It is also interesting to notice the similarity between this solution
and the characterization of the solution for an all-pay auction with one object~\cite{BayeKV1996}.
We notice that there is a tight relationship between this scenario, Colonel Blotto games and auctions.
A Colonel Blotto game can be seen as a simultaneous all-pay auction
of multiple items of complete information.
An all-pay auction is an auction in which every bidder
must forfeit its bid regardless of whether it wins the object
which is awarded to the highest bidder.
It is an auction of complete information
since the value of the object is known to every bidder.
In other contexts, this was already noted by Szentes and Rosenthal~\cite{SzentesR2003},
Roberson~\cite{Roberson2006} and Kvasov~\cite{Kvasov2007}.
%
%
%


\begin{figure}[htp]
\centering
\includegraphics[scale=0.49]{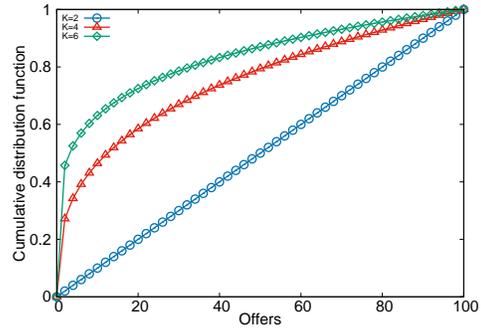}
\caption{Winner-takes-all equilibrium offer distribution when we consider a budget of $1000$ dollars, $K=2$ and~\mbox{$v_n/V=1/20$}; $K=4$ and \mbox{$v_n/V=1/40$};
and $K=6$ and \mbox{$v_n/V=1/60$} (for them to have the same support).}
\end{figure}

\begin{figure}[htp]
\centering
\includegraphics[width=0.5\textwidth]{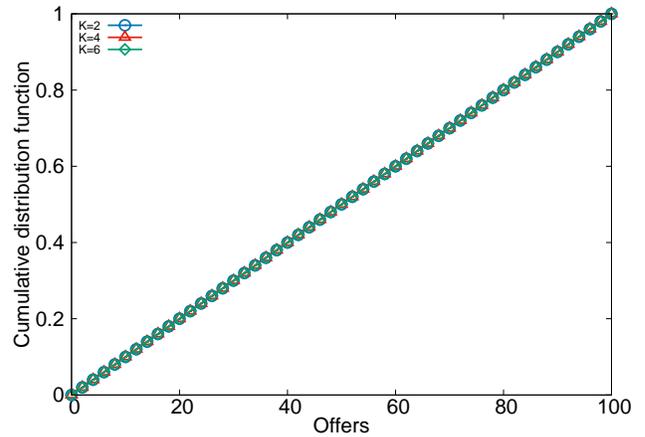}
\caption{Borda equilibrium offer distribution when we consider a budget of $1000$ dollars, $K=2$, $K=4$ and $K=6$ and relative value~\mbox{$v_n/V=1/20$}. 
We notice that the equilibrium offer distribution of Borda is independent of the number of competing marketing campaigns.}
\label{fig:pato1}
\end{figure}


Figure~\ref{fig:pato1}(a) gives us the equilibrium offer distribution
when we consider that the budget of each marketing campaign is $1000$~dollars for
three different competing scenarios:
\begin{itemize}
\item there are $2$ marketing campaigns and the relative value of a customer is $v_n/V=1/20$;
\item there are $4$ marketing campaigns and the relative value of a customer is $v_n/V=1/40$;
\item there are $6$ marketing campaigns and the relative value of a customer is $v_n/V=1/60$.
\end{itemize}
The chosen parameters in the three scenarios allow us to consider the same support of the offers distributions.

We observe that when there are two competing marketing campaigns, the equilibrium offers are made uniformly at random
over the support interval from $0$ to $100$ dollars.
However, increasing the number of competing marketing campaigns, we observe
that marketing campaigns offers are skewed offering less than the average
to most of the potential customers while offering much more than the average
for a reduced number of potential customers.
In particular, for four marketing campaigns,
more than $50\%$ of the potential customers receive offers of less than $14$ dollars (the average offer is $25$ dollars).
This effect is even more pronounced for six marketing campaigns where more than
$50\%$ of potential customers receive offers of less than $4$ dollars (the average offer is $17$ dollars).


\subsection*{Borda}

Another interesting case is when the rank-scoring rule is 
linearly decreasing with the ranking (we denote it Borda
for its similarity to Borda ranking votes). 
For example, it can be given by
\begin{equation*}
s_1=\frac{(K-1)}S,s_2=\frac{(K-2)}S,s_3=\frac{(K-3)}S,\ldots, s_K=0,
\end{equation*}
where $S=\sum_{j=1}^K s_j=K(K-1)/2$.

The function~$R_n(q)$ under that rule is given by
\begin{align*}
R_n(q)&=v_n\sum_{j=1}^K P(j,q)\frac{2(K-j)}{K(K-1)}\\
&=\frac{2v_n}K\sum_{j=1}^K\binom{K-1}{j-1}q^{K-j}(1-q)^{j-1}\frac{K-j}{K-1}\\
&=\frac{2v_n}K\sum_{j=0}^{K'}\binom{K'}{j}q^{K'-j}(1-q)^j\left(1-\frac{j}{K'}\right)\\
&=\frac{2v_n}K\left(1-\frac{K'(1-q)}{K'}\right)=\frac{2v_n}Kq
\end{align*}
where we have made the change of variable $K'=K-1$ and use
the formula of the expected value of a binomial distribution.

Thus, by the previous theorem
\begin{align*}
\frac{V}{KB}x
&=\frac{2v_n}{K} F(x), \quad\forall x\in[0,2Bv_n/V].
\end{align*}
Therefore,
\begin{equation*}
F(x)=\frac{x}{2Bv_n/V}\quad\forall x\in[0,2Bv_n/V].
\end{equation*}
Therefore, the equilibrium offer distribution under this rule
is a uniform distribution over the interval from $0$ to $2Bv_n/V$.
We notice that the equilibrium offer distribution is independent
of the number of competing marketing campaigns~$K$.

Figure~\ref{fig:pato1}(b) gives us the equilibrium offer distribution
when we consider that the budget for each marketing campaign is $1000$ dollars,
the relative value of a customer is $v_n/V=1/20$
and we consider three scenarios with $K=2$, $K=4$, and $K=6$.

We observe that in these three scenarios the equilibrium offer distribution is uniformly distributed
over the suppport interval from $0$ to $100$ dollars and it is {\sl independent on the number
of competing marketing campaigns}.

The previously considered scenarios, Winner-takes-all and Borda, are two out of many possible scenarios that can be analyzed and to which our previous results
can be applied.

%


\section{Conclusions}\label{sec:conclusions}

In this work, we studied advertising competitions in social networks. In particular, we analyzed the scenario of several marketing campaigns determining to which potential customers to market and how many resources to allocate to these potential customers while taking into account that competing marketing campaigns are trying to do the same.

As a consequence of social network dynamics, the importance of every potential customer in the market can be expressed in terms of her network value which is a measure of the influence exerted among her peers and friends and of which we provided an analytical expression for the voter model of social networks.

Defining rank-scoring rules for potential customers and
using tools from game theory, we have given a closed form expression of the symmetric equilibrium offer strategy for the marketing campaigns from which no campaign has any interesting to deviate.
Moreover, we presented some interesting out of many possible scenarios to which our results can be applied.

\bibliographystyle{hieeetr}
\bibliography{mybibfile}

\begin{thebibliography}{10}

\bibitem{DomingosR2001}
P.~Domingos and M.~Richardson, ``Mining the network value of customers,'' in
  {\em Proceedings of the 7th ACM SIGKDD International Conference on Knowledge
  Discovery and Data Mining}, (New York), pp.~57--66, ACM Press, Aug.~26--29
  2001.

\bibitem{RichardsonD2002}
M.~Richardson and P.~Domingos, ``Mining knowledge-sharing sites for viral
  marketing,'' in {\em Proceedings of the 8th ACM SIGKDD International
  Conference on Knowledge Discovery and Data Mining}, (New York), pp.~61--70,
  ACM Press, July~23-26 2002.

\bibitem{NemhauserWF1978}
G.~Nemhauser, L.~Wolsey, and M.~Fisher, ``An analysis of approximations for
  maximizing submodular set functions {I},'' {\em Mathematical Programming},
  vol.~14, no.~1, pp.~265--294, 1978.

\bibitem{Kempe2003}
D.~Kempe, J.~Kleinberg, and {\'E}.~Tardos, ``Maximizing the spread of influence
  through a social network,'' in {\em Proceedings of the 9th {ACM} {SIGKDD}
  International Conference on Knowledge Discovery and Data Mining}, (New York),
  pp.~137--146, ACM Press, Aug.~24--27 2003.

\bibitem{Kempe2005}
D.~Kempe, J.~Kleinberg, and {\'E}.~Tardos, ``Influential nodes in a diffusion
  model for social networks,'' in {\em ICALP: Annual International Colloquium
  on Automata, Languages and Programming}, 2005.

\bibitem{CliffordS1973}
P.~Clifford and A.~Sudbury, ``A model for spatial conflict,'' {\em Biometrika},
  vol.~60, no.~3, pp.~581--588, 1973.

\bibitem{HolleyL1975}
R.~A. Holley and T.~M. Liggett, ``Ergodic theorems for weakly interacting
  infinite systems and the voter model,'' {\em The Annals of Probability},
  vol.~3, no.~4, pp.~643--663, 1975.

\bibitem{EvenDar2007}
E.~Even-Dar and A.~Shapira, ``A note on maximizing the spread of influence in
  social networks,'' in {\em Proceedings of the 3rd International Conference on
  Internet and Network Economics (WINE'07)}, (Berlin, Heidelberg),
  pp.~281--286, Springer-Verlag, 2007.

\bibitem{BharathiKS2007}
S.~Bharathi, D.~Kempe, and M.~Salek, ``Competitive influence maximization in
  social networks,'' in {\em Internet and Network Economics} (X.~Deng and
  F.~Graham, eds.), vol.~4858 of {\em Lecture Notes in Computer Science},
  pp.~306--311, Springer Berlin Heidelberg, 2007.

\bibitem{GoyalK2012}
S.~Goyal and M.~Kearns, ``Competitive contagion in networks,'' in {\em
  Proceedings of the 44th Symposium on Theory of Computing}, STOC '12, (New
  York, NY, USA), pp.~759--774, ACM, 2012.

\bibitem{BorodinFO2010}
A.~Borodin, Y.~Filmus, and J.~Oren, ``Threshold models for competitive
  influence in social networks,'' in {\em Proceedings of the 6th International
  Conference on Internet and Network Economics}, WINE'10, (Berlin, Heidelberg),
  pp.~539--550, Springer-Verlag, 2010.

\bibitem{ChasparisS2010}
G.~C. Chasparis and J.~Shamma, ``Control of preferences in social networks,''
  in {\em Proceedings of the 49th IEEE Conference on Decision and Control
  ({CDC})}, pp.~6651--6656, Dec 2010.

\bibitem{GrossW1950}
O.~Gross and R.~Wagner, ``A continuous colonel blotto game,'' in {\em RAND
  Corporation RM-408}, 1950.

\bibitem{Roberson2006}
B.~Roberson, ``The colonel blotto game.,'' {\em Economic Theory}, vol.~29,
  no.~1, pp.~1 -- 24, 2006.

\bibitem{masucciS2014}
A.~M. Masucci and A.~Silva, ``{Strategic Resource Allocation for Competitive
  Influence in Social Networks},'' in {\em {52nd Annual Allerton Conference on
  Communication, Control, and Computing}}, (Monticello, Illinois, United
  States), Oct.~1--3 2014.

\bibitem{MasucciS2015}
A.~M. Masucci and A.~Silva, ``Defensive {R}esource {A}llocation in {S}ocial
  {N}etworks,'' in {\em 54th IEEE Conference on Decision and Control (CDC)},
  ({O}saka, {J}apan), pp.~2927--2932, Dec.~15--18 2015.

\bibitem{Borel1921}
{\'{E}}.~Borel, ``{La} th{\'{e}}orie du jeu et les {\'{e}}quations
  int{\'{e}}grales {\`{a}} noyau sym{\'{e}}trique,'' {\em Comptes Rendus de
  l'Acad{\'{e}}mie des Sciences}, vol.~173, pp.~1304--1308, 1921.

\bibitem{BorelV1938}
{\'{E}}.~Borel and J.~Ville, {\em Application de la th{\'{e}}orie des
  probabilit{\'{e}}s aux jeux de hasard}.
\newblock 1938.

\bibitem{Gross1950}
O.~Gross, ``The symmetric blotto game,'' in {\em RAND Corporation RM-424},
  1950.

\bibitem{LaslierP2002}
J.-F. Laslier and N.~Picard, ``Distributive politics and electoral
  competition,'' {\em Journal of Economic Theory}, vol.~103, pp.~106--130,
  March 2002.

\bibitem{Friedman1958}
L.~Friedman, ``Game-theory models in the allocation of advertising
  expenditures,'' {\em Operations Research}, vol.~6, no.~5, pp.~699--709, 1958.

\bibitem{Roberson2010}
B.~Roberson, {\em Allocation Games}.
\newblock John Wiley \& Sons, Inc., 2010.

\bibitem{SchwartzLS2014}
G.~{S}chwartz, P.~{L}oiseau, and S.~{S}astry, ``{T}he {H}eterogeneous {C}olonel
  {B}lotto {G}ame,'' in {\em {I}nternational {C}onference on {N}etwork {G}ames,
  {C}ontrol and {O}ptimization (NETGCOOP)}, ({T}rento, {I}taly), Oct.~29--31
  2014.

\bibitem{myerson1993}
R.~B. Myerson, ``Incentives to cultivate favored minorities under alternative
  electoral systems,'' {\em The American Political Science Review}, vol.~87,
  no.~4, pp.~856--869, 1993.

\bibitem{BayeKV1996}
M.~R. Baye, D.~Kovenock, and C.~G.~d. Vries, ``The all-pay auction with
  complete information,'' {\em Economic Theory}, vol.~8, no.~2, pp.~291--305,
  1996.

\bibitem{SzentesR2003}
B.~Szentes and R.~W. Rosenthal, ``Three-object two-bidder simultaneous
  auctions: chopsticks and tetrahedra,'' {\em Games and Economic Behavior},
  vol.~44, no.~1, pp.~114--133, 2003.

\bibitem{Kvasov2007}
D.~Kvasov, ``Contests with limited resources,'' {\em Journal of Economic
  Theory}, vol.~136, pp.~738--748, September 2007.

\end{thebibliography}
\end{document}